\newtheorem{lemma}{Lemma}
\newtheorem{theorem}{Theorem}
\newtheorem{corollary}{Corollary}
\newtheorem{definition}{Definition}
\newtheorem{example}{Example} 
\newcommand{\etal}{\textit{et al. }}
\title{On the Uniqueness of Balanced Complex Orthogonal Design}
\author{
Yiwen Gao\footnote{School of Computer Science, Fudan University, Shanghai 200433, China. Email: ywgao21@m.fudan.edu.cn},
Yuan Li\footnote{School of Computer Science, Fudan University, Shanghai 200433, China. Email: yuan\_li@fudan.edu.cn} and
Haibin Kan\footnote{Shanghai Key Laboratory of Intelligent Information Processing, School of Computer Science, Fudan University, Shanghai 200433, China;
Shanghai Engineering Research Center of Blockchain, Shanghai 200433, China;
Yiwu Research Institute of Fudan University, Yiwu City 322000, China. Email: hbkan@fudan.edu.cn}}
\date{}
\begin{document}

\maketitle

\abstract{
\emph{Complex orthogonal designs} (CODs) play a crucial role in the construction of space-time block codes. Their real analog, real orthogonal designs (or equivalently, sum of squares composition formula) have a long history. Adams \etal (2011) introduced the concept of \emph{balanced complex orthogonal designs} (BCODs) to address practical considerations. BCODs have a constant code rate of $1/2$ and a minimum decoding delay of $2^m$, where $2m$ is the number of columns. Understanding the structure of BCODs helps design space-time block codes, and it is also fascinating in its own right.

We prove, when the number of columns is fixed, all (indecomposable) balanced complex orthogonal designs (BCODs) have the same parameters $[2^m, 2m, 2^{m-1}]$, and moreover, they are all equivalent.
}
\par \textbf{Keywords}: Complex orthogonal design, space-time block codes, orthogonal design, sum of squares composition formula

\section{Introduction}

The investigation of orthogonal designs, also known as the sum of squares composition formula, has a rich historical background. The main problem is, for which $[p, n, k]$, does there exist a $p \times n$ matrix $G$, where each nonzero entry is a linear combination of indeterminates $x_1, x_2, \ldots, x_k$ with complex coefficients such that $G^T G = (|x_1|^2 + \ldots + |x_k|^2) I_n$. A $[p, n, k]$ orthogonal design is equivalent to a $[n, k, p]$ composition formula
\[
(x_1^2 + \ldots + x_n^2) (y_1^2 + \ldots + y_k^2) = (z_1^2 + \ldots + z_p^2),
\]
where $X=(x_1, \ldots, x_n)$ and $Y = (y_1, \ldots, y_k)$ are indeterminates, and each $z_k$ is bilinear in $X$ and $Y$. 
Orthogonal designs have an intimate connection with many fields in mathematics, including combinatorics, topology, representation theory, etc. For an in-depth study of this subject matter, Shapiro's and Seberry's books offer exceptional resources \cite{Shapiro2011, Seberry2017}.

Results regarding \emph{complex orthogonal designs (CODs)} were rare in comparison to (real) orthogonal designs. A. V. Geramita and J. M. Geramita were the first to propose the definition of complex orthogonal designs in \cite{Geramita1978}, focusing exclusively on square CODs. Alarcon and Yiu proposed the definition of non-square CODs in the context of \emph{Hermitian sum of squares formula} \cite{AY1993}. A Hermitian $[n, k, p]$ formula is
$$
(|x_1|^2 + \ldots + |x_n|^2) (|y_1|^2 + \ldots + |y_k|^2) = (|z_1|^2 + \ldots + |z_p|^2),
$$
where $X=(x_1, \ldots, x_n)$, $Y = (y_1, \ldots, y_k)$ are \emph{complex} indeterminates, and each $z_k$ is bilinear in $(X, \overline{X})$ and $Y$. In fact, a Hermitian sum of squares formula $[n, k, p]$ formula is equivalent to a $[p, n, k]$ COD. For further insights into this connection, we recommend consulting Shapiro's book \cite{Shapiro2011}, which explores this topic extensively while also providing other intriguing results concerning sums of squares. 

Due to its application in \emph{space-time block codes}, the study of CODs was revived. In 1999, Tarokh \etal introduced space-time block codes, a method for transmission using multiple antennas \cite{Tarokh1999}. For complex signal constellations, the definition of complex orthogonal design arises naturally. Roughly speaking, a $[p, n, k]$ COD $G$ is a $p \times n$ matrix, where each nonzero entry is a complex linear combination of complex indeterminates $z_1, z_2, \ldots, z_k$ and their conjugates, such that $G^H G = (|z_1|^2 + \ldots + |z_k|^2) I_n$, where $G^H$ denotes the Hermitian transpose of $G$. In space-time block codes, $n$ corresponds to the number of antennas, $p$ the decoding delay, and $k/p$ the transmission rate. For readers seeking additional information on space-time codes and wireless communications, we recommend consulting the book \cite{Wysocki2007}.

At that time, two burning problems were to determine the \emph{maximal rate} $k/p$ for any number of antennas $n$, and the \emph{minimum decoding delay} $p$ when the rate is maximal.  Liang completely solved the maximal rate problem \cite{Liang2003}. Liang proved $\frac{k}{p} \le \frac{m+1}{2m}$ when $n = 2m$ or $2m-1$. Furthermore, Liang provided an algorithm that generates CODs achieving the rate. For CODs with linear combination, Haiquan Wang and Xiang-gen Xia proved that when $n\ge3$, the rate is upper bounded by $\frac{3}{4}$ \cite{Wang2003}. This bound is reachable when $n=3,4$ 
\cite{Tarokh1999}. In two papers \cite{Adams2007, Adams2009}, Adams \etal solved the minimum decoding delay problem by proving
\[
p \ge \begin{cases}
 {2m \choose m-1} & \text{if $n \not\equiv 2 \pmod 4$}\\
 2{2m \choose m-1} & \text{if $n \equiv 2 \pmod 4$},
 \end{cases}
\]
where $n = 2m$ or $2m-1$. The tightness of the above lower bound on $p$ was already shown by constructions in \cite{Liang2003, lu2005closed}. Li \etal determined all possible structures for CODs without $2 \times 2$ submatrix 
$\begin{pmatrix}
	\pm z_j & 0 \\
	0  & \pm z^*_j \\
\end{pmatrix},
$
which contains all the CODs with maximal rate, and some others. Note that the results mentioned hereafter are proved in the combinatorial setting, where each nonzero entry is $\pm z_i$ or $\pm z_i^*$.

In addition to rate ($C_1$) and decoding delay ($C_2$), when designing CODs for practical applications, researchers consider several other criterions including
\begin{itemize}
	\item ($C_3$) transceiver signal linearization: linearization is achieved if the COD is conjugation-separated, i.e., nonzero entries in a row are all conjugated or non-conjugated.
	\item ($C_4$) peak-to-average power ratio (PAPR): zero-free CODs are desirable to achieve low PAPR, which reduces the need to switch on and off antennas.
	\item ($C_5$) power balance: each variable appears the same number of times in each column.
	\item ($C_6$) no irrational coefficients: it eliminates floating point multiplication.
	\item ($C_7$) no linear processing: every nonzero entry is $\pm z_k$ or $\pm z_k^*$. It reduces signaling and decoding complexity.
\end{itemize}
Regarding these characteristics, we send interested readers to \cite{Adams2011} for a detailed discussion. Driven by these practical considerations, Adams \etal proposed the definition of \emph{balanced complex orthogonal design} (BCOD) \cite{Adams2011}. BCODs have a constant rate of $\frac{k}{p} = \frac{1}{2}$ and a minimum  decoding delay of $p = 2^m$, where $n = 2m$. In addition, they provided a method to generate a rate 1/2 COD with no zero entries ($C_4$) through a BCOD and its zero-masking row companion matrix. The properties of these two types of CODs are outlined in Table~\ref{table: properties of BCOD}: BCOD performs well with respect to $C_3$ and $C_5$-$C_7$; the modified rate 1/2 COD performs well with respect to $C_4$-$C_7$.

\begin{table}[!htbp]\label{table: properties of BCOD}
\centering
\begin{tabular}{|l|c|c|}
    \hline
    && the modified rate 1/2 \\&BCOD&COD in \cite{Adams2011}\\
    \hline
    $C_1$: Rate & 1/2 & 1/2\\
    \hline
    $C_2$: Decoding delay & $2^{m}$($2m$ columns) & $2^{m}$($2m$ columns)\\
    \hline
    $C_3$: Transceiver signal&&\\ linearization & \checkmark & \scalebox{0.75}{\usym{2613}}\\
    \hline
    $C_4$: Peak-to-average power&&\\ ratio & half zero entries & zero-free\\
    \hline
    $C_5$: Power balance & \checked & \checked\\
    \hline
    $C_6$: No irrational coefficients & \checked & \checked\\
    \hline
    $C_7$: No linear processing & \checked & \checked\\
    \hline 
\end{tabular}
\caption{Properties of BCOD and the rate 1/2 COD in \cite{Adams2011}}
\end{table}

When $n \not\equiv 0 \pmod 8$, Adams \etal proved a lower bound $2^m$ on $p$ by reducing BCODs to real orthogonal designs and applying classical results on real orthogonal designs \cite{Adams2011}. The final case that $n \equiv 0 \pmod 8$ was settled by Liu \etal \cite{Liu2015} by a purely combinatorial argument. Independently, Das \cite{Das2016} gave proof of $p \ge 2^m$ and provided a construction that achieves the bound. In addition, Das proved that BCODs with the same parameters are equivalent \emph{ignoring} signs and conjugations. In addition to their practical nature, BCODs are elegant combinatorial objects. So understanding the structure of BCODs is a meaningful task.

\subsection{Our results}

In this paper, we prove BCODs are unique (up to equivalence) when the number of columns is fixed. Specifically, when $n = 2m$, all indecomposable BCODs have parameters $[2^m, 2m, 2^{m-1}]$, and they are all equivalent.

The overall proof strategy is to show that after removing two columns, a BCOD is the concatenation of two smaller BCODs. Using mathematical induction, we prove that all (indecomposable) BCODs are of the same parameter $[2^m, 2m, 2^{m-1}]$.

As for the uniqueness, we prove by induction on a stronger statement, that is, any BCOD can be transformed into a canonical form using equivalence operations \emph{without} any column operations (including column permutations and column negations). Note that there are two BCODs (of smaller size), one on top of the other. As such, column operations will apply to both of them simultaneously. If one can transform each of them into canonical forms without column operations, then one can transform \emph{both} of them into canonical forms at the same time.

In terms of techniques, our proof introduces the following new ingredients:
\begin{itemize}
    \item We prove that any (indecomposable) BCOD is the concatenation of two smaller BCODs, after removing two columns.
    \item We prove that any (indecomposable) BCOD can be transformed into a canonical form without column operations.
    \item We use the locality argument to prove that the remaining two columns (that were previously removed) are uniquely determined (due to orthogonality).
\end{itemize}
We believe our techniques have the potential to go beyond BCODs. It is likely that one can apply similar arguments to broader classes of CODs.

Despite their deep practical nature, we believe CODs are nice combinatorial objects worth studying and possess much nicer structures than the real ones.


\section{Preliminaries}

\emph{Balanced complex orthogonal design} (BCOD) is a class of rate $1/2$ complex orthogonal design (COD) introduced by Adams \etal \cite{Adams2011}. In this section, we review the relevant definitions and some known facts about CODs and BCODs.

\begin{definition}
A \emph{complex orthogonal design (COD)} $G[p, n, k]$ is a $p \times n$ matrix, whose nonzero entry is either $\pm z_i$ or $\pm z_i^*$, $i = 1, 2, \ldots, k$, where $z_1, z_2, \ldots, z_k$ are complex indeterminates, such that
\[
G^H G = (|z_1|^2 + |z_2|^2 + \ldots + |z_k|^2) I_n,
\]
where $G^H$ denotes the Hermitian transpose of $G$.
\end{definition}

For notational convenience, let $[z_i]$ denote entry $z_i$, $-z_i$, $z^*_i$, or $-z^*_i$. By the definition of COD, it is clear that for every $i$, entry $[z_i]$ must appear in each column exactly once, and any two columns are orthogonal.

In some literature, linear processing is allowed in the definition of CODs, that is, each nonzero entry is a complex linear combination of $z_1, z_2, \ldots, z_k$ and their conjugations, which are called \emph{CODs with linear processing}. However, CODs with linear processing are not in the scope of our paper; we study CODs without linear processing using combinatorial methods.

\begin{definition} \label{def:equiv}(Equivalence operations) The following operations on a COD are called \emph{equivalent operations}:
\begin{itemize}
\item Rearrange the order of rows (``row permutation'').
\item Multiply any row by $-1$ (``row negation'').
\item Rearrange the order of columns (``column permutation'').
\item Multiply any column by $-1$ (``column negation'').
\item Change the index of all instances of a certain variable (``instance renaming'').
\item Negate all instances of a certain variable (``instance negation'').
\item Conjugate all instances of a certain variable (``instance conjugation'').
\end{itemize}

Two CODs are called \emph{equivalent} if one can be changed into the other through equivalence operations.
\end{definition}

One can easily verify that equivalence operations preserve orthogonality. In other words, after applying an arbitrary sequence of equivalence operations to a $[p, n, k]$ COD, we end up with another $[p, n, k]$ COD.

The main problem of complex orthogonal design is: for which $p, n, k$, does a $[p, n, k]$ COD exist? Furthermore, what are the different $[p, n, k]$ CODs up to equivalence?

Unfortunately, the main problem is wide open; we are only able to resolve it for a few restricted classes of CODs. In this paper, our goal is to answer it for BCODs.

\begin{definition}
\label{def:Bi_form}
 ($B_i$ form \cite{Adams2007}) Let $G$ be a $[p, n, k]$ COD. We say that $G$ is \emph{in $B_i$ form} if, after equivalence operations except for column permutation, $G$ contains the following submatrix
\[
	B_i=
	\left(
	\begin{array}{c | c}
	\begin{matrix}
	z_i & 0 & \dots & 0 \\
	0 & z_i & \dots & 0 \\
	\vdots & \vdots & \ddots & \vdots \\
	0 & 0 & \dots & z_i
	\end{matrix}
	& M_i \\
	\hline
	-M_i^H 
	&\begin{matrix}
	z_i^* & 0 & \dots & 0 \\
	0 & z_i^* & \dots & 0 \\
	\vdots & \vdots & \ddots & \vdots \\
	0 & 0 & \dots & z_i^*
	\end{matrix}
	\end{array}
	\right),
\]
where $M_i$ is a submatrix whose nonzero entry is either $\pm z_i$ or $\pm z_i^*$, $i = 1, 2, \ldots, k$. 
We call the submatrix of $G$ that is equivalent to $B_i$ (defined above) \emph{$B_i$ form submatrix} of $G$. 
\end{definition}

Observe that a COD can be transformed into $B_i$ form, for any $i$, if column permutations are allowed; in Definition \ref{def:Bi_form}, column permutations are not allowed. For example, COD 
$\begin{pmatrix}
	z_1 & z_2 \\
	-z_2^* & z_1^*
\end{pmatrix}$
is in $B_1$ form, but not in $B_2$ form.

The following example shows two equivalent CODs in $B_1$ form and $B_2$ form respectively.

\begin{example}\label{emample:Bi_form} $[8, 5, 4]$ COD.
\[
G_1 = 
\left(
\begin{array}{c c c : c c}
z_1 & 0 & 0 & 0 & z_2\\
0 & z_1 & 0 & -z_2 & 0\\
0 & 0 & z_1 & -z_3 & -z_4\\
\hdashline
0 & z_2^{*} & z_3^{*} & z_1^{*} & 0\\
-z_2^{*} & 0 & z_4^{*} & 0 & z_1^{*}\\
\hdashline
-z_3^{*} & -z_4^{*} & 0 & 0 & 0\\
z_4 & -z_3 & z_2 & 0 & 0\\
0 & 0 & 0 & z_4 & -z_3^{*}
\end{array}
\right)
\]
\[
G_2 =
\left(
\begin{array}{c c c : c c}
z_2 & 0 & 0 & 0 & z_1\\
0 & z_2 & 0 & -z_1 & 0\\
0 & 0 & z_2 & -z_3 & z_4\\
\hdashline
0 & z_1^{*} & z_3^{*} & z_2^{*} & 0\\
-z_1^{*} & 0 & -z_4^{*} & 0 & z_2^{*}\\
\hdashline
-z_3^{*} & z_4^{*} & 0 & 0 & 0\\
-z_4 & -z_3 & z_1 & 0 & 0\\
0 & 0 & 0 & -z_4^{*} & -z_3^{*}
\end{array}
\right)
\]
$G_1$ is in $B_1$ form, and $G_2$ is in $G_2$ form. We can change $G_1$ into $G_2$ by
\begin{itemize}
    \item permutating columns: $\begin{pmatrix}
        1 & 2 & 3 & 4 & 5\\
        5 & 4 & 3 & 2 & 1
    \end{pmatrix},$
    \item permutating rows: 
    $\begin{pmatrix}
        1 & 2 & 3 & 4 & 5 & 6 & 7 & 8\\
        1 & 2 & 7 & 4 & 5 & 8 & 3 & 6
    \end{pmatrix}$, and
    \item negating row $2$ and row $5$.
\end{itemize}
\end{example}

Driven by some practical considerations in space-time block codes (e.g., transceiver signal linearization, power balance, no irrational coefficients, no linear processing), Adams \etal proposed the definition of balanced complex orthogonal designs (BCOD), which turns out to be interesting on its own.

\begin{definition}
\label{def:bcod}
(Balanced complex orthogonal design \cite{Adams2011}) Complex orthogonal design $G[p, 2m, k]$ with $2m$ columns is a \emph{balanced complex orthogonal design (BCOD)} if it satisfies the followings:
\begin{itemize}
	\item Every row of $G$ has exactly $m$ zeros and $m$ nonzero entries.
	\item $G$ is conjugation-separated, i.e., all nonzero entries in any given row are either all conjugated (``a conjugated row'') or all non-conjugated (``a non-conjugated row'').
	\item For each $j \in \{1, 2, \ldots, k\}$, the submatrix $M_j$ in the $B_j$ form is skew-symmetric, i.e., $M_j^T = -M_j$.
\end{itemize}
\end{definition}

\begin{example}
\label{example:bcod_244}
$[2, 4, 4]$ BCOD.
$$
\begin{pmatrix}
z_1 & 0 & 0 & z_2\\
0 & z_1 & -z_2 & 0\\
0 & z_2^{*} & z_1^{*} & 0\\
-z_2^{*} & 0 & 0 & z_1^{*}\\
\end{pmatrix}
$$
\end{example}

\begin{example}
\label{example:bcod_844}
$[8, 4, 4]$ BCOD.
$$	
\begin{pmatrix}
z_1 & 0 & 0 & 0 & z_2 & z_3 \\
0 & z_1 & 0 & -z_2 & 0 & z_4 \\
0 & 0 & z_1 & -z_3 & -z_4 & 0 \\
0 & z_2^{*} & z_3^{*} & z_1^{*} & 0 & 0 \\
-z_2^{*} & 0 & z_4^{*} & 0 & z_1^{*} & 0 \\
-z_3^{*} & -z_4^{*} & 0 & 0 & 0 & z_1^{*} \\
z_4 & -z_3 & z_2 & 0 & 0 & 0 \\
0 & 0 & 0 & z_4^{*} & -z_3^{*} & z_2^{*} \\
\end{pmatrix}. 
$$
\end{example}

\begin{example} $[16, 8, 8]$ BCOD.
$$
\begin{pmatrix}
z_1 & 0 & 0 & 0 & 0 & z_2 & z_3 & z_4 \\
0 & z_1 & 0 & 0 & -z_2 & 0 & z_5 & z_6 \\
0 & 0 & z_1 & 0 & -z_3 & -z_5 & 0 & z_7 \\
0 & 0 & 0 & z_1 & -z_4 & -z_6 & -z_7 & 0 \\
0 & z_2^{*} & z_3^{*} & z_4^{*} & z_1^{*} & 0 & 0 & 0 \\
-z_2^{*} & 0 & z_5^{*} & z_6^{*} & 0 & z_1^{*} & 0 & 0 \\
-z_3^{*} & -z_5^{*} & 0 & z_7^{*} & 0 & 0 & z_1^{*} & 0 \\
-z_4^{*} & -z_6^{*} & -z_7^{*} & 0 & 0 & 0 & 0 & z_1^{*} \\
z_5 & -z_3 & z_2 & 0 & 0 & 0 & 0 & z_8 \\
0 & 0 & 0 & z_8^{*} & z_5^{*} & -z_3^{*} & z_2^{*} & 0 \\
z_6 & -z_4 & 0 & z_2 & 0 & 0 & -z_8 & 0 \\
0 & 0 & -z_8^{*} & 0 & z_6^{*} & -z_4^{*} & 0 & z_2^{*} \\
z_7 & 0 & -z_4 & z_3 & 0 & z_8 & 0 & 0 \\
0 & z_8^{*} & 0 & 0 & z_7^{*} & 0 & -z_4^{*} & z_3^{*} \\
0 & z_7 & -z_6 & z_5 & -z_8 & 0 & 0 & 0 \\
-z_8^{*} & 0 & 0 & 0 & 0 & z_7^{*} & -z_6^{*} & z_5^{*} \\
\end{pmatrix},
$$
\end{example}

\begin{definition} (Direct sum) Let $G_1, G_2$ be $[p_1, n, k_1], [p_2, n, k_2]$ CODs respectively on disjoint indeterminates. Define
\[
G = \begin{pmatrix}
	G_1 \\
	G_2
\end{pmatrix}
\]
to be the \emph{direct sum} of $G_1$ and $G_2$, which is a $[p_1 + p_2, n, k_1 + k_2]$ COD.
\end{definition}

\begin{definition} (Decomposable) Let $G$ be a COD. If $G$ can be expressed as the direct sum of two CODs, we say $G$ is \emph{decomposable}, and if $G$ cannot be expressed as the direct sum of two CODs, we say $G$ is \emph{indecomposable}.
\end{definition}

Using modified-Liang algorithm, Adams \etal generated $[2^m, 2m, 2^{m-1}]$ BCODs for any $m \ge 1$ \cite{Adams2011}. When $m \equiv 1,2,3 \pmod 4$, Adams \etal proved that $p \ge 2^m$ by reducing BCOD to rate-1 real orthogonal design (ROD). Liu \etal \cite{Liu2015} proved $p \ge 2^m$ when $m \equiv 0 \pmod 4$, and their proof is combinatorial. 

Using different methods, Das \cite{Das2016} also proved the lower bound $p \ge 2^m$ for a slightly more general class of CODs than BCODs. Furthermore, Das proved that BCODs of the same parameters are equivalent ignoring signs and negation.

\begin{theorem}
\label{thm:delay}
(\cite{Adams2011, Liu2015}) Let $G$ be a $[p, 2m, k]$ BCOD. Then $p \ge 2^m$.
\end{theorem}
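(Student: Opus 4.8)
The plan is to prove $p \ge 2^m$ by induction on $m$, exploiting a column-deletion decomposition of BCODs rather than the classical reduction to real orthogonal designs. That classical route proves the bound only for $m \not\equiv 0 \pmod 4$ (via the Hurwitz--Radon function: a rate-$1$ real design on $2m$ columns needs delay at least the least power of two whose Radon--Hurwitz number is $\ge 2m$, which equals $2^m$ exactly when $m \not\equiv 0 \pmod 4$), and it leaves $m \equiv 0 \pmod 4$ to a separate combinatorial argument. The inductive route treats all residues of $m$ uniformly and fits the structural machinery developed in the rest of this paper.

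First I would dispose of the base case and reduce to indecomposable designs. For $m = 1$ the identity $G^H G = (|z_1|^2 + \cdots + |z_k|^2) I_2$ forces $G$ to have (generic) rank $2$, hence $p \ge 2 = 2^1$. Since a decomposable BCOD is a direct sum of indecomposable BCODs and stacking rows only increases $p$, it suffices to establish the bound for indecomposable BCODs.

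The crux is a structural lemma: every indecomposable $[p,2m,k]$ BCOD with $m \ge 2$ has a pair of columns whose supports partition the row set $[p]$ (each column being nonzero in exactly $p/2$ rows), and deleting this pair produces, after a row permutation, a direct sum of two BCODs $H_1$ and $H_2$ on the remaining $2(m-1)$ columns. Because the two deleted columns partition $[p]$, every row loses exactly one nonzero entry, so each row of $H_i$ again has exactly $m-1$ zeros and $m-1$ nonzeros; I would then verify that the two induced row groups use disjoint variables and inherit conjugation-separation and the skew-symmetric $M_j$ blocks of the $B_j$-form, so that each $H_i$ is a genuine BCOD. In $B_1$-form the natural candidate is an antipodal pair --- column $i$, carrying $z_1$ on the diagonal, together with column $i+m$, carrying $z_1^*$ --- and orthogonality of these two columns against all the others is what I expect to pin down the splitting.

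Granting the lemma, the induction closes at once: deletion preserves the number of rows, so $p = p(H_1) + p(H_2)$, and by the inductive hypothesis each $2(m-1)$-column BCOD $H_i$ satisfies $p(H_i) \ge 2^{m-1}$, giving $p \ge 2 \cdot 2^{m-1} = 2^m$. The hard part will be the structural lemma itself --- showing that a support-partitioning pair of columns exists, and that the surviving columns decouple into two independent BCODs on disjoint variable sets. I expect this to hinge on a careful analysis of the column-orthogonality relations together with the skew-symmetry of the $M_j$ blocks, which is precisely the decomposition engine that the remainder of the paper is built to supply.
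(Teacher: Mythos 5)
The paper does not actually prove Theorem~\ref{thm:delay}: it is quoted from \cite{Adams2011} (reduction to real orthogonal designs via Hurwitz--Radon, covering $m \not\equiv 0 \pmod 4$) and \cite{Liu2015} (a separate combinatorial argument for $m \equiv 0 \pmod 4$). Your plan to replace those proofs by induction on $m$ founders on a circularity at exactly its crux. Everything hinges on your ``structural lemma'' that deleting the distinguished pair of columns leaves a direct sum of \emph{at least two} BCODs; writing the deleted design as $G' = H_1 \oplus \cdots \oplus H_\ell$, you need $\ell \ge 2$, and nothing in your sketch excludes $\ell = 1$, i.e.\ that $G'$ is itself indecomposable. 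This is not a routine verification: in this paper the analogous step (in the proof of Theorem~\ref{thm:bcod_order}) rules out $\ell = 1$ \emph{by invoking Theorem~\ref{thm:delay} itself} --- if $\ell = 1$ one would obtain a $[2^{m-1}, 2m, 2^{m-2}]$ BCOD, contradicting $p \ge 2^m$. So the decomposition machinery you want to reuse presupposes the very bound you are trying to prove. Nor can the bad case be killed cheaply: even strengthening to a joint induction with Theorem~\ref{thm:bcod_order} (so that $\ell = 1$ yields a putative $[2^{m-1}, 2m, 2^{m-2}]$ BCOD), the only elementary obstruction available is Lemma~\ref{lem:all_diff}, which forces $k \ge 1 + \binom{m}{2}$; this contradicts $k = 2^{m-2}$ only for $m \le 5$. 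For $m \ge 6$ the parameters $[2^{m-1}, 2m, 2^{m-2}]$ pass every counting test in the paper, and iterating the column deletion produces a self-consistent cascade of parameters, so some genuinely new idea is required --- that missing idea is precisely the content of the cited theorem.

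There is also a concrete error in how you identify the two summands. You take the two row groups to be ``rows supported on the first deleted column'' and ``rows supported on the second deleted column'' and expect them to carry disjoint variables. Already in Example~\ref{example:bcod_244} (a BCOD with $m=2$; the deleted columns are $2$ and $4$) the supports of the deleted columns are rows $\{2,3\}$ and rows $\{1,4\}$, whereas the true direct summands of $G'$ are rows $\{1,3\}$ (carrying $z_1$) and rows $\{2,4\}$ (carrying $z_2$). In general each summand mixes both support types: the summand containing the $B_1$-form residue
\[
\begin{pmatrix}
z_1 I_{m-1} & M_1' \\
-M_1'^H & z_1^* I_{m-1}
\end{pmatrix}
\]
has its upper rows supported on column $2m$ and its lower rows supported on column $m$. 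The decomposition is governed by variable connectivity, not by the support pattern of the deleted pair. (Your base case, the reduction to indecomposable designs, and the support-partition property of columns $m$ and $2m$ --- implicit in Lemma~\ref{lem:removal} --- are all fine, but they are the easy parts.)
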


The following two theorems are the main results of our paper. For any $m \ge 1$, we prove there is only one indecomposable BCOD with $2m$ columns up to equivalence.

\begin{theorem}
\label{thm:bcod_order}
Let $G$ be a $[p, 2m, k]$ indecomposable BCOD. Then $p = 2^m$ and $k = 2^{m-1}$.
\end{theorem}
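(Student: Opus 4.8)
The plan is to reduce the two equalities to the single inequality $p \le 2^m$ and then prove it by induction on $m$. First I would record that $k = p/2$ holds for \emph{every} BCOD, by a double count of nonzero entries: each indeterminate occurs exactly once in each of the $2m$ columns, so there are $2mk$ nonzero entries in all, while the balance axiom makes each of the $p$ rows contribute exactly $m$ of them, for a total of $pm$; equating the two counts gives $2mk = pm$, i.e. $k = p/2$. Hence $k = 2^{m-1}$ will follow automatically once $p = 2^m$ is established, and I never have to track the number of variables through the induction. Since Theorem~\ref{thm:delay} already supplies $p \ge 2^m$, the whole statement reduces to proving the upper bound $p \le 2^m$ for indecomposable BCODs.

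I would run the induction on $m$. The base case $m = 1$ is a short classification: with two columns every row carries a single nonzero, so the rows pair up into $2\times 2$ blocks, one per indeterminate (each forced to the form $\bigl(\begin{smallmatrix} z_j & 0\\ 0 & z_j^*\end{smallmatrix}\bigr)$ up to equivalence by orthogonality and conjugation-separation); the design is the direct sum of these blocks, and indecomposability leaves exactly one, so $p = 2 = 2^1$. For $m \ge 2$ the engine is the following decomposition lemma, which I would state and prove: if $G$ is an indecomposable BCOD with $2m$ columns, then after deleting the two columns that carry the first diagonal entries of $z_1$ and $z_1^*$ in the $B_1$ form (columns $1$ and $m+1$) and reordering rows, $G$ becomes the direct sum of \emph{exactly two} indecomposable BCODs, each with $2(m-1)$ columns and built on a disjoint set of indeterminates. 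Granting the lemma, the induction hypothesis gives each of the two summands exactly $2^{m-1}$ rows; since deleting columns does not change the number of rows, $p = 2 \cdot 2^{m-1} = 2^m$, which closes the step.

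The whole difficulty is concentrated in this lemma, and within it in the word ``exactly two''. Putting $G$ in $B_1$ form, the rows $2, \dots, m$ and $m+2, \dots, 2m$ still carry $z_1$ (respectively $z_1^*$) after the deletion, so they all lie in one common summand; it is routine that the surviving rows break into variable-disjoint blocks, and I would check block by block that each is again a genuine BCOD, verifying that every surviving row keeps exactly $m-1$ nonzeros, that conjugation-separation is inherited, and that the relevant $M_j$ remain skew-symmetric. The hard part is to show that there are precisely two blocks rather than three or more. For this I would combine orthogonality with the indecomposability of $G$: the $B_1$ form and the skew-symmetry of $M_1$ tightly constrain where the two deleted columns are nonzero, and the goal is to prove that reinstating columns $1$ and $m+1$ introduces nonzero entries inside at most two of the blocks; a third block would then share no indeterminate and no column with the rest of $G$, making it a direct-sum component and contradicting indecomposability, so the number of blocks is exactly two. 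Pinning down how the first row of $M_1$ and the surviving ``extra'' rows distribute among the blocks, and how column orthogonality forces their entries to pair up, is the delicate, case-based core of the argument, and the step I expect to be hardest.
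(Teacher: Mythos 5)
Your overall strategy coincides with the paper's: put $G$ in $B_1$ form, delete a distinguished pair of columns (you take $\{1,m+1\}$, the paper takes $\{m,2m\}$ --- immaterial), show the remainder is a BCOD splitting into indecomposable blocks, and induct on $m$. Your preliminary reductions are sound: the double count giving $k=p/2$ is correct, and combining the delay bound $p\ge 2^m$ with ``at most two blocks'' would indeed close the induction (it even spares you from having to rule out the one-block case separately, which the paper handles by a separate appeal to Theorem~\ref{thm:delay}). But the step you yourself flag as hardest --- that there are at most two blocks --- is not just left unproven; the route you propose for it cannot work. You want to show that restoring columns $1$ and $m+1$ ``introduces nonzero entries inside at most two of the blocks,'' so that a third block would share no column or indeterminate with the rest and hence be a direct summand. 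The premise is false: every row of $G$ has exactly $m$ nonzero entries while every row of the deleted design $G'$ has exactly $m-1$ (a fact you verify yourself when checking the blocks are BCODs), so \emph{every} row of $G$ --- hence every block, however many there are --- carries exactly one nonzero entry in the two restored columns. Likewise each restored column, being a column of a COD, contains every indeterminate of $G$ exactly once, so the restored columns necessarily involve the variables of \emph{all} blocks. A hypothetical third block can never be ``disconnected'' in the sense you describe.

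What actually must be proved --- and what the paper's locality argument establishes --- concerns where the \emph{variables} land, not where the zeros are: the restored-column entries on the rows of the block containing $z_1 I_{m-1}$ and of the block containing the row $(0,\ldots,0,-\alpha^T)$ are uniquely forced by orthogonality, via repeated $2\times 2$ submatrices of the form $\bigl(\begin{smallmatrix}[z_r] & [z_s]\\ [z_s] & ?\end{smallmatrix}\bigr)$, and these forced entries use up \emph{exactly} the indeterminates of those one or two blocks. Only then does indecomposability bite: if a further block existed, $G$ would split into two CODs on disjoint variable sets (the completed blocks together with the two restored columns, versus all remaining rows), contradicting indecomposability; and the one-block case is excluded by the delay bound. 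This propagation-and-exhaustion argument is the core content of the paper's proof, and it is precisely the piece missing from --- indeed, mis-aimed in --- your proposal, so the proof as sketched has a genuine gap at the step carrying all the difficulty.
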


\begin{theorem}
\label{thm:bcod_uniqueness}
For any $m \ge 1$, all $[2^m, 2m, 2^{m-1}]$ BCODs are equivalent. Moreover, we can transform one BCOD to another using equivalence operations \emph{without} column negations.
\end{theorem}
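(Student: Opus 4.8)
The plan is to prove a statement stronger than Theorem~\ref{thm:bcod_uniqueness}: fix a recursively defined canonical BCOD $\hat{G}_m$ (with $\hat{G}_1 = \left(\begin{smallmatrix} z_1 & 0 \\ 0 & z_1^* \end{smallmatrix}\right)$ and $\hat{G}_m$ obtained from $\hat{G}_{m-1}$ by a doubling construction), and show by induction on $m$ that every $[2^m, 2m, 2^{m-1}]$ BCOD can be brought to $\hat{G}_m$ using only row permutations, row negations, and instance operations (renaming, negation, conjugation) --- that is, \emph{without any column operations at all}. Since this excludes column permutations and column negations simultaneously, it immediately yields the theorem (indeed more than the ``no column negation'' refinement). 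The reason to forbid all column operations rather than just column negations is structural: the inductive step exhibits the BCOD, after two columns are deleted, as two smaller BCODs stacked vertically over a \emph{shared} set of columns. Any column operation used to normalize one of the two halves would be forced on the other half as well; insisting that each half be canonicalized with no column operations makes the two normalizations independent, so that they can be carried out simultaneously without conflict.

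The first key lemma (the concatenation lemma) is that, after designating a distinguished variable $z_1$ and deleting the two columns in which $z_1$ sits alone on the diagonal of the $B_1$-form blocks --- in canonical position the first column and the $(m{+}1)$-st column --- the remaining $2(m-1)$ columns split the $2^m$ rows into two groups, each of which is itself a $[2^{m-1}, 2(m-1), 2^{m-2}]$ BCOD, and the two groups involve \emph{disjoint} variable sets. I would prove this by working in $B_1$ form and tracking, row by row, which occurrences of each variable lie in the two deleted columns: the rows retaining a $z_1$ entry form one half and the rows losing their $z_1$ entry form the other, while skew-symmetry of $M_1$ together with column orthogonality forces the off-diagonal variables to partition accordingly. (The same lemma drives Theorem~\ref{thm:bcod_order}, as the row count doubles and the variable count splits exactly as the parameters demand.)

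With the concatenation lemma in hand, the inductive step runs as follows. Given an arbitrary $[2^m, 2m, 2^{m-1}]$ BCOD, normalize $z_1$ and delete its two special columns to obtain two $[2^{m-1}, 2(m-1), 2^{m-2}]$ BCODs on disjoint variables sharing the same columns. Because the two halves use disjoint variables their instance operations do not interact, and because the induction hypothesis canonicalizes each half with \emph{no} column operations, applying both sets of operations at once (row operations act on disjoint row blocks, instance operations on disjoint variables) reduces the deleted-column matrix to two copies of $\hat{G}_{m-1}$. It then remains to reinsert the two deleted columns and show they are uniquely determined: here I would use a locality/orthogonality argument, reading off each entry of the two columns from the requirement that they be orthogonal to every already-canonicalized column and consistent with the $B_1$-form and conjugation-separation constraints, thereby forcing them (up to the allowed row and instance operations) into their canonical positions and completing the reduction to $\hat{G}_m$.

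The main obstacle I anticipate is the concatenation lemma itself, specifically proving that deleting the two $z_1$-columns cleanly partitions \emph{all} variables and \emph{all} rows into two self-contained BCODs --- a global structural fact that must be extracted from purely local orthogonality relations together with the skew-symmetry of $M_1$. A secondary difficulty is keeping the ``no column operations'' invariant airtight: one must verify that normalizing $z_1$ into its $B_1$-form position, and later pinning down the two reinserted columns, can genuinely be arranged so that the recursion never invokes a column operation, since even a single column permutation inside the recursion would desynchronize the two shared-column halves and break the simultaneous canonicalization.
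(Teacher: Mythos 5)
Your strategy is essentially the paper's: prove a statement stronger than Theorem~\ref{thm:bcod_uniqueness} by induction on $m$, using the ``no column operations'' strengthening precisely so that the two smaller BCODs sitting on a shared set of columns (the paper's Lemma~\ref{lem:removal} plus the decomposition $\ell=2$ established in the proof of Theorem~\ref{thm:bcod_order}) can be normalized simultaneously, and then pin down the two deleted columns by a locality/orthogonality argument (the paper's Corollary~\ref{cor:columnm2m_unique}). The choice of deleted columns ($1$ and $m{+}1$ versus $m$ and $2m$) is cosmetic. The one substantive deviation is that your inductive statement is unconditional (every BCOD reaches a fixed canonical $\hat G_m$), whereas the paper's is conditional: two BCODs \emph{sharing a pair of rows} can be mapped to each other without column operations. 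Your unconditional statement is in fact true, but the conditionality is not idle --- it is exactly what makes the last step of the paper's induction sound, and it is where your sketch has a genuine gap.

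The gap is your claim that, once \emph{both} halves have been canonicalized, the two reinserted columns are ``uniquely determined'' by orthogonality against the already-canonicalized columns. They are not. Take the $[4,4,2]$ BCOD of Example~\ref{example:bcod_244} and negate its second and fourth columns (the two columns that get deleted when $m=2$): after negating the second and fourth rows one checks this is again a BCOD, with \emph{identical} retained columns $1$ and $3$ but different special columns. So the completion is determined at best up to a simultaneous sign flip, and your argument still owes a proof that every residual ambiguity can be absorbed by row/instance operations that \emph{preserve} the two canonical halves (in this example: negate all instances of $z_2$ and negate rows $2$ and $4$). Note this does not contradict Corollary~\ref{cor:columnm2m_unique}, because that corollary assumes the entire first $2m$ rows are the exact $B_i$-form submatrix --- i.e., it anchors the special-column entries of those rows, not merely the retained columns. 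The paper dodges the problem by ordering the steps differently: it canonicalizes the \emph{first} half only, then anchors the cross pair $(0 \mid -\alpha^T)$, $(-\alpha^H \mid 0)$ using instance renaming/negation/conjugation on the second half's variables (legal precisely because those variables have not yet been pinned, plus one possible row negation), then applies the induction hypothesis to the second halves, which now share that pair, and only then invokes the corollary. To repair your version, either adopt this sequential anchoring, or prove that the group of operations stabilizing the canonical retained part acts transitively on all valid completions; as written, the final step does not go through. A smaller imprecision: your description of the split (``rows retaining a $z_1$ entry form one half, rows losing it form the other'') only accounts for the $2m$ rows containing $[z_1]$; the remaining $2^m-2m$ rows are assigned to the two halves by the direct-sum (connectivity) structure, and showing there are exactly two summands also uses the delay bound of Theorem~\ref{thm:delay} to rule out a single summand.
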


\section{Main Results}

In this section, we prove Theorem~\ref{thm:bcod_order} and Theorem~\ref{thm:bcod_uniqueness}.


\begin{definition} (Column-restricted equivalence operations \cite{Liu2015}) A sequence of equivalence operations are called \emph{column-restricted equivalence operations} if all the column permutations are transpositions of column $i$ and column $m+i$ for some $i \in \{1, 2, \ldots, m\}$.
\end{definition}

With column-restricted equivalence operations, we can only swap column $i$ and column $m+i$ for any $i \in \{1, 2, \ldots, m\}$, instead of permuting columns arbitrarily. Besides that, other operations including row permutation, row negation, column negation, instance negation, and instance conjugation are allowed.

The following lemma says a BCOD in $B_i$ form can be put in $B_j$ form for any $j$ using column-restricted equivalence operations. This is a key lemma for understanding BCODs.

\begin{lemma}
\label{lem:liu}
(Theorem 2 in \cite{Liu2015}) Let $G$ be a $[p, n, k]$ BCOD. If $G$ is in $B_i$ form for some $i \in \{1, 2, \ldots, k\}$, then for any $j \in \{1, 2, \ldots, k\}$, $G$ can be put in $B_j$ form through column-restricted equivalence operations (without renaming the variables).
\end{lemma}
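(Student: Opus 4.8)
The plan is to fix the $B_i$-form skeleton and track the $2m$ occurrences of the variable $z_j$---one in each column, by the COD property---showing that pairwise column orthogonality together with the balanced, conjugation-separated, and skew-symmetry conditions forces these occurrences into a configuration that the permitted operations can straighten into $B_j$. Throughout, the only column moves available are the transpositions $\ell \leftrightarrow m+\ell$, so the $z_i$-diagonal must serve as a rigid coordinate frame that we cannot disturb by arbitrary reordering; row permutations, row and instance negations, and instance conjugation are unrestricted, but we are forbidden from renaming variables.

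First I would record the orthogonality consequences. Writing the $B_i$ submatrix with $z_i I_m$ in the top-left and bottom-right blocks and $M_i, -M_i^H$ off the diagonal (with $M_i^T=-M_i$), I would expand the Hermitian inner product of column $a$ with column $b$ and use that, entrywise, each product $\overline{G_{ra}}G_{rb}$ is a signed monomial $\pm z_s^{(*)} z_t^{(*)}$; vanishing of the sum forces these monomials to cancel in conjugate pairs. The conjugation-separated hypothesis makes each row cleanly typed (all-conjugated or all-non-conjugated), which restricts which monomials can occur and pairs the rows carrying $z_i$ with those carrying $z_i^*$. From this I would deduce that within each column pair $(\ell, m+\ell)$ the variable $z_j$ occurs exactly once non-conjugated and once conjugated (up to sign), and I would locate the two rows carrying those instances.

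Next I would select the target rows for $B_j$. Using the cancellation structure, I would identify the $m$ rows destined to host $z_j$ on the first diagonal and the $m$ rows hosting $z_j^*$ on the second, and determine, for each index $\ell$, whether the pair $(\ell, m+\ell)$ must be transposed so that the non-conjugated instance of $z_j$ lands among the first $m$ columns. Performing those column-pair transpositions, followed by a row permutation sending the chosen rows to positions $1,\dots,2m$ and by row/instance negations normalizing signs, places $z_j$ and $z_j^*$ on the two diagonals. The induced off-diagonal blocks are then forced by orthogonality into the form $N, -N^H$, and the BCOD skew-symmetry condition applied to the index $j$ yields $N^T=-N$, completing the $B_j$ form without any variable renaming.

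I expect the main obstacle to be precisely the restriction to the transpositions $\ell \leftrightarrow m+\ell$: unlike the unconstrained $B_i$-form definition, we cannot freely permute columns to bring $z_j$ into position, so the argument must show that the sign and conjugation data extracted from orthogonality determine \emph{exactly} which of the $m$ pairs to swap, and that no further column rearrangement is ever required. Making this bookkeeping rigorous---carrying signs and conjugations through each swap while checking that the remaining global orthogonality relations stay consistent and that the resulting $M_j$ is genuinely skew-symmetric---is the delicate part; the $z_i$-diagonal anchor and the enforced pairing of conjugated with non-conjugated rows are the structural tools I would lean on to pin everything down.
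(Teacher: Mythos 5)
First, a point of reference: the paper itself does not prove this lemma; it is quoted as Theorem~2 of \cite{Liu2015}, so there is no internal proof to compare against, and your attempt must stand on its own as a reproof of that theorem. Judged that way, the endgame of your outline is sound: \emph{if} one knows that for every column pair $(\ell, m+\ell)$ the two instances of $[z_j]$ have opposite conjugation types, then the rest is routine. Pairwise monomial cancellation shows that the rows carrying two same-type (e.g.\ both non-conjugated) instances of $z_j$ must vanish at each other's columns, so after swapping the appropriate pairs and permuting/negating rows one gets $z_j I_m$ and $z_j^* I_m$ on the two diagonals; and cancellation of the $z_j$-monomials in the inner product of a first-half column with a second-half column forces the bottom-left block to equal $-M_j^H$. (Your final appeal to BCOD skew-symmetry to get $N^T=-N$ is not needed: Definition~\ref{def:Bi_form} does not require $M_j$ to be skew-symmetric, and the definition of BCOD only guarantees skew-symmetry for \emph{some} $B_j$-form realization, so it could not simply be ``applied'' to yours anyway.)

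The genuine gap is the sentence ``From this I would deduce that within each column pair $(\ell, m+\ell)$ the variable $z_j$ occurs exactly once non-conjugated and once conjugated.'' That statement is essentially the entire content of the lemma, and it does \emph{not} follow from the cancellation structure you describe. What two-column cancellation actually yields is: no row carries two instances of the same variable; same-type columns have vanishing ``crossing'' entries; opposite-type columns have matched crossing entries. A configuration in which columns $\ell$ and $m+\ell$ \emph{both} carry non-conjugated instances of $z_j$ (in rows $r_1, r_2$, with $G(r_1, m+\ell)=G(r_2,\ell)=0$) is perfectly consistent with all of these pairwise constraints, so no contradiction arises at this level. Even the weaker fact that exactly $m$ instances are conjugated and $m$ are not already requires a global argument (for instance: each row carrying an instance of $z_j$ vanishes on all same-type columns, hence by balancedness has exactly $m-1$ nonzeros on opposite-type columns, and $(m-1)$-regularity of the resulting bipartite graph forces both sides to have size $m$); and the per-pair statement you need is strictly stronger, since the pairing $(\ell, m+\ell)$ is defined by the $B_i$ form, so any proof must couple the distribution of $z_j$ to the $z_i$-skeleton and the block $M_i$ -- for variables not appearing in $M_i$ this coupling is entirely indirect, passing through other variables' instances in the rows below the $B_i$ submatrix. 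You correctly flag this bookkeeping as ``the delicate part,'' but the proposal offers no mechanism for carrying it out, which is precisely why this is a published theorem with a substantial combinatorial proof rather than a corollary of column orthogonality. As written, the proof does not go through.
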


\begin{definition} In a BCOD, two rows are called \emph{a pair} if the first row is of the form
\[
(\alpha_1, \alpha_2, \ldots, \alpha_m, \beta_1, \beta_2, \ldots, \beta_m),
\]
where each $\alpha_i$ and $\beta_i$ is $0$ or $[z_r]$, and the second row is of the form
\[
(\pm \beta^*_1, \pm \beta^*_2, \ldots, \pm \beta^*_m, \pm \alpha^*_1, \pm \alpha^*_2, \ldots, \pm \alpha^*_m).
\]
\end{definition}

In Example \ref{example:bcod_244}, the first row and the third row 
\[
\begin{pmatrix}
z_1 & 0 & 0 & z_2\\
0 & z_2^{*} & z_1^{*} & 0\\
\end{pmatrix}
\]
is a pair, and the second row and the fourth row
\[
\begin{pmatrix}
0 & z_1 & -z_2 & 0\\
-z_2^{*} & 0 & 0 & z_1^{*}\\
\end{pmatrix}
\]
is a pair. In Example \ref{example:bcod_844}, after permuting rows, we can put all the rows in pairs
\[
\begin{pmatrix}
z_1 & 0 & 0 & 0 & z_2 & z_3 \\
0 & z_2^{*} & z_3^{*} & z_1^{*} & 0 & 0 \\
\hdashline
0 & z_1 & 0 & -z_2 & 0 & z_4 \\
-z_2^{*} & 0 & z_4^{*} & 0 & z_1^{*} & 0 \\
\hdashline
0 & 0 & z_1 & -z_3 & -z_4 & 0 \\
-z_3^{*} & -z_4^{*} & 0 & 0 & 0 & z_1^{*} \\
\hdashline
z_4 & -z_3 & z_2 & 0 & 0 & 0 \\
0 & 0 & 0 & z_4^{*} & -z_3^{*} & z_2^{*} \\
\end{pmatrix}.
\]

In fact, all the rows in a BCOD are in pairs.

\begin{lemma} Let $G$ be a BCOD in $B_i$ form for some $i$. Then all the rows are in pairs.	
\end{lemma}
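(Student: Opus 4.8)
The plan is to analyze the $B_i$ form structure directly and show that each row outside the two diagonal blocks must be matched with exactly one other row to form a pair. Recall that in $B_i$ form, $G$ contains a distinguished submatrix: the top-left $k' \times k'$ block is $z_i I$, the bottom-right block is $z_i^* I$, and the off-diagonal blocks are $M_i$ and $-M_i^H$, where the BCOD condition forces $M_i^T = -M_i$. So the first step is to establish the pairing for the rows that belong to this $B_i$ submatrix: I would show that row $r$ (containing $z_i$ in the top block) and row $k'+r$ (containing $z_i^*$ in the bottom block) form a pair. This follows because the top block rows have the shape $(0,\ldots,z_i,\ldots,0 \mid (\text{row } r \text{ of } M_i))$ while the bottom block rows have the shape $((\text{row } r \text{ of } -M_i^H) \mid 0,\ldots,z_i^*,\ldots,0)$. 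Using skew-symmetry $M_i^T = -M_i$, the $r$-th row of $-M_i^H$ is the conjugate (up to sign) of the $r$-th row of $M_i$, and $z_i^*$ sits in the correct position, so these two rows satisfy the pair definition.

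Next I would handle the remaining rows of $G$ — those not part of the explicit $B_i$ submatrix. The key observation is that $G$ is a BCOD, so every row has exactly $m$ nonzero entries, is conjugation-separated, and every variable $[z_j]$ appears exactly once in each column (a basic consequence of the COD orthogonality noted after the COD definition). The natural approach is to invoke Lemma~\ref{lem:liu}: since $G$ is in $B_i$ form, for every $j$ it can be put in $B_j$ form via column-restricted equivalence operations. Column-restricted operations only transpose column $\ell$ with column $m+\ell$, so they preserve the ``pair'' structure of rows (the block decomposition into first-$m$ and last-$m$ columns is respected, and conjugation/negation of a row maps pairs to pairs). Thus if I can show the rows are in pairs after reaching $B_j$ form for the right choice of $j$, the pairing transfers back. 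More directly, I would argue that each row $r$ contains some variable $[z_j]$, put $G$ into $B_j$ form, identify $r$ as a diagonal-block row there, use Step 1 to pair it with its partner, and then confirm this partner relationship is consistent across the choices of $j$.

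The cleanest route, which I would ultimately favor, is to prove the pairing uniformly by a counting/orthogonality argument rather than case analysis. For a fixed row $\rho = (\alpha_1,\ldots,\alpha_m,\beta_1,\ldots,\beta_m)$, I would define its candidate partner as the row $\rho'$ that must contain $z_i^*$ or $z_i$ in the diagonal position dictated by $B_i$ form, and then verify the sign pattern $(\pm\beta_1^*,\ldots,\pm\beta_m^*,\pm\alpha_1^*,\ldots,\pm\alpha_m^*)$ by checking orthogonality of $\rho$ with all other rows together with the fact that each column contains each variable exactly once. The main obstacle I anticipate is the bookkeeping of signs and the verification that the partner is \emph{unique} and \emph{well-defined} — i.e., that the row paired with $\rho$ via variable $z_i$ is the same as the one paired via a different variable $z_{i'}$. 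I expect to resolve this by exploiting the skew-symmetry of each $M_j$ (which pins down the off-diagonal sign structure) together with conjugation-separation (which guarantees the two rows of a pair have opposite conjugation types, matching the starred entries in the definition). The verification that exactly half the columns in each row are zero — forcing the $\alpha$-support and $\beta$-support to interlock correctly between $\rho$ and $\rho'$ — is where the balance condition of the BCOD is essential and will be the crux of the argument.
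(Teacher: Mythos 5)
Your first two paragraphs are correct and are essentially the paper's own proof: within each $B_j$ form submatrix the rows are paired (skew-symmetry of $M_j$ makes the $r$-th row of $-M_j^H$ the entrywise conjugate of the $r$-th row of $M_j$, and $z_j^*$ sits in the matching diagonal position), every row of $G$ contains some $[z_j]$, and Lemma~\ref{lem:liu} plus the fact that column-restricted equivalence operations preserve pairs transfers this pairing back to $G$. The consistency issue you raise in the second paragraph (whether the partner obtained via $z_j$ agrees with the one obtained via $z_{j'}$) is settled immediately by noting that any partner of a given row must carry that row's variables, conjugated, in the same columns, and each variable occurs exactly once per column, so the partner is unique; this is a point the paper glosses over, and flagging it is to your credit. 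By contrast, the third paragraph's route that you say you would ``ultimately favor'' is not actually carried out --- the obstacles you anticipate there (well-definedness of the partner, sign bookkeeping) are precisely what Lemma~\ref{lem:liu} packages for you --- so you should retain the argument of your first two paragraphs rather than attempt to replace it.
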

\begin{proof}
For any $j$, in each $B_j$ form submatrix, it is clear that there are $m$ pairs of rows by Definition \ref{def:bcod}. By Lemma \ref{lem:liu}, we know if $G$ is a BCOD in $B_i$ form for some $i$, then $G$ can be transformed into $B_j$ form through column-restricted equivalence operations. Since column-restricted equivalence operations preserve pairs, we conclude that all the rows are in pairs.
\end{proof}

\begin{lemma}
\label{lem:all_diff}
Let $G$ be a $[p, n, k]$ BCOD. For any $i \in \{1,2,\ldots, k\}$, after putting $G$ into $B_i$ form, there are no repeated non-zero entries in the upper triangular part of $M_i$, where two non-zero entries are considered repeated if they are the same up to negation.
\end{lemma}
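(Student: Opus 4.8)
The plan is to argue by contradiction directly from column orthogonality, after first recording the rigid internal structure that the BCOD axioms force on $M_i$. Since $G$ is in $B_i$ form, the $2m$ rows of the $B_i$ submatrix are genuine rows of $G$, so each of them carries exactly $m$ nonzero entries. In each of the top $m$ rows the diagonal entry $z_i$ already accounts for one nonzero entry in the left block, so the corresponding row of $M_i$ must contain exactly $m-1$ nonzero entries; as skew-symmetry forces the diagonal of $M_i$ to vanish, this means every off-diagonal entry of $M_i$ is nonzero. Moreover, because the top rows contain the non-conjugated entry $z_i$ and $G$ is conjugation-separated, every entry of $M_i$ is non-conjugated, i.e. of the form $\pm z_t$. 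I would use these three facts throughout: zero diagonal, all off-diagonal entries nonzero, and all entries non-conjugated.

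Now suppose, for contradiction, that some variable $z_r$ occupies two distinct positions $(r_1,c_1)$ and $(r_2,c_2)$ of the strict upper triangle of $M_i$ (so $r_1<c_1$, $r_2<c_2$), each up to sign. First I would pin down the relative positions of these occurrences using the once-per-column property (each $[z_t]$ appears exactly once in every column of $G$), transported through the skew-symmetry of $M_i$. If $c_1=c_2$ the two entries already lie in one column of $M_i$, hence in column $m+c_1$ of $G$, contradicting uniqueness; if $r_1=r_2$ then the two skew-reflected copies $(c_1,r_1),(c_2,r_1)$ lie in one column of $M_i$, again contradicting uniqueness. Thus $r_1\ne r_2$ and $c_1\ne c_2$. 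A short ordering argument then shows that $r_1=c_2$ and $r_2=c_1$ cannot hold simultaneously, so after possibly swapping the two occurrences I may assume $r_1\ne c_2$.

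With this configuration fixed, I would examine the Hermitian inner product of columns $m+c_1$ and $m+c_2$ of $G$, which must vanish as a polynomial because the columns are distinct. The point is to isolate a single cross-monomial that can arise in only one place. Set $\pm z_s:=(M_i)_{r_1,c_2}$; this entry is off-diagonal (as $r_1\ne c_2$) hence nonzero, and it cannot be a copy of $z_r$, for otherwise column $c_2$ of $M_i$ would contain $z_r$ in both rows $r_1$ and $r_2$, violating uniqueness in column $m+c_2$. Thus $s\ne r$, and $z_r^{*}z_s$ is a genuine cross-term. In row $r_1$ the two columns contribute $\overline{(\pm z_r)}\,(\pm z_s)=\pm z_r^{*}z_s$. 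Because $[z_r]$ occurs exactly once in column $m+c_1$ and that occurrence is the non-conjugated entry $\pm z_r$ at row $r_1$, no other row can supply a factor $z_r^{*}$ there; and since the unique occurrence of $[z_r]$ in column $m+c_2$ is likewise non-conjugated, no row supplies $z_r^{*}z_s$ in the opposite order either. Hence the coefficient of $z_r^{*}z_s$ equals $\pm1$, contradicting orthogonality.

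The main obstacle — and the reason the distinctness hypotheses matter — is that the naive version of this argument fails for a single occurrence of $z_r$: its skew-reflected partner sits on the zero diagonal of $M_i$, so the candidate cross-term is annihilated and nothing is forced. The whole weight of the proof therefore rests on using the two distinct occurrences to guarantee that the partner entry $(M_i)_{r_1,c_2}$ is simultaneously off-diagonal, hence nonzero, and not another copy of $z_r$, so that the resulting monomial is a true cross-term with a unique, uncancellable source. Getting the bookkeeping of these two conditions right — via the $r_1\ne r_2$, $c_1\ne c_2$ step and the reduction to $r_1\ne c_2$ — is the delicate part; the remainder is the routine verification that orthogonality kills every cross-monomial.
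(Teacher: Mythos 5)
Your proof is correct and takes essentially the same route as the paper's: rule out repeated entries sharing a row or column, then exhibit a $2\times 2$ configuration in which the two copies of $[z_r]$ force an uncancellable cross-term in the Hermitian inner product of two columns of the right-hand block, contradicting orthogonality. The only difference is bookkeeping: the paper reflects one occurrence through skew-symmetry into the lower triangle (so both off-corner entries of its $2\times 2$ submatrix are automatically off-diagonal, hence nonzero), whereas you keep both occurrences in the upper triangle and dispose of the possible diagonal collision $r_1=c_2$ by an ordering argument; indeed your handling of the shared-row and shared-column cases is spelled out more fully than the paper's, which simply assumes $i\ne s$ and $j\ne t$ at the outset.
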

\begin{proof} Without loss of generality, let $i = 1$. We prove there are no repeated nonzero entries in the upper triangular part of $M_1$.

Assume for contradiction that there exists two repeated nonzero entries in the upper triangular part of $M_1$, denoted by $M_1(i, j)$, $M_1(s, t)$, where $i < j$, $s < t$, $i \not= s$, $j \not= t$.  Since $M_1$ is skew-symmetric, $M_1(j, i)$ and $M_1(t, s)$ are also repeated. Consider repeated nonzero entries $M_1(i, j)$ and $M_1(t, s)$, where there are two cases:
\begin{itemize}
	\item $M_1(i, j)$ and $M_1(t, s)$ are in the same row or column.
	\item $M_1(i, j)$ and $M_1(t, s)$ are in different rows and different columns.
\end{itemize}
In the former case, if nonzero entries $M_1(i, j)$ and $M_1(t, s)$ are in the same row or column (which are the same up to negation), then $G$ cannot be orthogonal, which is a contradiction! In the latter case where $M_1(i, j)$ and $M_1(t, s)$ are in different rows and different columns, consider submatrix
\[
\begin{pmatrix}
M_1(i, j) & M_1(i, s) \\
M_1(t, j) & M_1(t, s) \\
\end{pmatrix}.
\]
Since $i \not= s$ and $j \not= t$, $M_1(i, s)$ and $M_1(t, j)$ are nonzero. Since $M_1(i, j) = \pm 
	M_1(t, s)$ are also nonzero, columns $j$ and $s$ cannot be orthogonal. Contradiction!
\end{proof}

\begin{lemma}
\label{lem:removal}
Let $G$ be a $[p, 2m, k]$ BCOD in $B_i$ form for some $i \in \{1,2,\ldots,k\}$. Delete the $m^{\text{th}}$ column and $2m^{\text{th}}$ column from $G$, denoted by $G'$. $G'$ is also a BCOD.
\end{lemma}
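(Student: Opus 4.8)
The plan is to assume without loss of generality that $G$ is in $B_1$ form, delete column $m$ and column $2m$ to obtain $G'$, and then verify the three defining properties of Definition~\ref{def:bcod} directly. First I would note that $G'$ is still a COD: deleting columns (and any row that happens to become entirely zero) never changes the Hermitian inner product of two surviving columns, so orthogonality is inherited, and each surviving column still contains every $[z_r]$ exactly once. Conjugation-separation is inherited for free, since removing entries from a row cannot turn a conjugated row into a non-conjugated one. So the real content is (a) every row of $G'$ has exactly $m-1$ nonzero entries, and (b) for every $j$ the matrix $M_j'$ read off from the $B_j$ form of $G'$ is skew-symmetric.

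For (a) I would first reduce to a statement about supports. Counting nonzero entries two ways gives $p = 2k$ for any BCOD (each of the $2m$ columns has $k$ nonzeros, each of the $p$ rows has $m$), so columns $m$ and $2m$ together carry $2k = p$ nonzeros. Consequently, if I can show that columns $m$ and $2m$ have disjoint supports (no row is nonzero in both), then their supports must partition all $p$ rows; every row is then nonzero in exactly one of the two deleted columns, loses exactly one nonzero entry, and so ends with $m-1$ nonzeros (and in particular no row of $G'$ is empty).

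Proving this disjointness is the crux and the step I expect to be the main obstacle. A naive attempt using only orthogonality of columns $m$ and $2m$ fails: if a row carries $[z_a]$ in column $m$ and $[z_b]$ in column $2m$ with $a\neq b$, its paired row carries $[z_b^*]$ and $[z_a^*]$ in those same two columns, and the two cross terms cancel in $(G^H G)_{m,2m}$, giving no contradiction. Instead I would use Lemma~\ref{lem:liu}. The property ``columns $m$ and $2m$ have disjoint support'' is invariant under column-restricted equivalence operations, since these map the pair of columns $\{m,2m\}$ to itself and send nonzero entries to nonzero entries. So, assuming for contradiction that some row is nonzero in both columns, I let $z_a$ be the variable occurring in its column-$m$ entry and put $G$ into $B_a$ form by Lemma~\ref{lem:liu}; the offending row still has both columns nonzero. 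But in $B_a$ form every occurrence of $z_a$ lies on the diagonals of the two blocks $z_a I_m$ and $z_a^* I_m$ (these account for all appearances, as $z_a$ occurs once per column), so the unique occurrence of $[z_a]$ in column $m$ is at row $m$, whose column-$2m$ entry is $M_a(m,m)=0$ by skew-symmetry; symmetrically the unique occurrence in column $2m$ is at row $2m$, whose column-$m$ entry is $0$. Either way the offending row is forced to vanish in one of the two columns, a contradiction.

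Finally, for (b) I would take $G$ in $B_j$ form (again via Lemma~\ref{lem:liu}) for each $j$ in turn. Because column-restricted operations keep columns $m$ and $2m$ as the last columns of the two diagonal halves, deleting them removes precisely the last row and the last column of $M_j$, leaving its top-left $(m-1)\times(m-1)$ block. A principal submatrix of a skew-symmetric matrix is skew-symmetric, so this surviving block $M_j'$ is skew-symmetric; moreover the rows $1,\dots,m-1$ together with $m+1,\dots,2m-1$ assemble into a clean $B_j$-form submatrix of $G'$ on $2(m-1)$ columns, while the two rows that fall out (old rows $m$ and $2m$) are readily checked to form a legitimate pair. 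This verifies the third BCOD condition for every $j$ and completes the proof that $G'$ is a BCOD; the only case needing a separate trivial remark is $m=1$, where there is nothing to delete.
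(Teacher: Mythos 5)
Your proof is correct, and its skeleton matches the paper's: both verify the three conditions of Definition~\ref{def:bcod}, both treat the COD property and conjugation-separation as immediate, and both prove skew-symmetry of $M_j'$ by passing to $B_j$ form via Lemma~\ref{lem:liu} (whose column permutations preserve the pair $\{m,2m\}$, so deletion commutes with the transformation) and observing that $M_j'$ is a principal submatrix of the skew-symmetric $M_j$. The one step where you genuinely diverge is the row count. The paper argues row by row: any row of $G$ contains some variable $[z_j]$, hence lies in the $B_j$ form submatrix after a column-restricted transformation, and there one sees directly that the row has exactly one nonzero among columns $m$ and $2m$ (the diagonal entry if it is row $m$ or $2m$ of that submatrix, otherwise the entry $M_j(r,m)\neq 0$, all off-diagonal entries of $M_j$ being nonzero). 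You instead prove the same fact globally: the identity $p=2k$ shows columns $m$ and $2m$ carry $p$ nonzeros in total, and your disjointness argument (put $G$ in $B_a$ form, where every occurrence of $[z_a]$ sits on one of the two diagonals and the two diagonal rows vanish in the opposite column by skew-symmetry) rules out any row meeting both columns, so the two supports partition the rows. Both arguments lean on exactly the same structural ingredients, Lemma~\ref{lem:liu} plus the zero diagonal of $M_j$, so neither is more elementary; the paper's version is slightly more direct, delivering ``exactly one nonzero per row in columns $m$ and $2m$'' in a single pass, while yours isolates the clean counting identity $p=2k$ and correctly diagnoses why naive orthogonality of columns $m$ and $2m$ alone cannot work (the paired-row cross terms cancel), a point the paper never makes explicit.
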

\begin{proof} $G'$ is obviously a COD, because each $[z_i]$ appears in every column exactly once, and any two columns are orthogonal. To prove $G'$ is a BCOD, we need to verify it satisfies the 3 conditions in Definition \ref{def:bcod}. 

First, we prove every row in $G'$ contains exactly $m-1$ zeros and $m-1$ nonzeros. In the $B_i$ form submatrix, it is clear that every row contains $m-1$ zeros and $m-1$ nonzeros. For any $j$, by Lemma \ref{lem:liu}, we can transform $G'$ from $B_i$ form to $B_j$ form using column-restricted equivalence operations, meaning, the column permutations only involve swaps between column $i$ and column $m+i$. Thus, in the $B_j$ form, column $m$ and column $2m$ are removed, which implies that every row in the $B_j$ form submatrix contains half zeros and half nonzeros.

Second, we prove $G'$ is conjugation-separated, which holds obviously since $G'$ is obtained from $G$ by deleting two columns.

Finally, we prove the submatrix $M_j'$ in the $B_j$ form of $G'$ is skew-symmetric for any $j$.
Using equivalence operations without column operations, the $B_i$ form submatrix of $G'$ is equivalent to 
$
\begin{pmatrix}
z_i I_{m-1} & M_i' \\
-M'^H_i & z_i^* I_{m-1} \\
\end{pmatrix}.
$
It is clear that $M_i'$ is obtained by deleting the last column and the last row from $M_i$. Since $M_i$ is skew-symmetric, so is $M_i'$. For other $B_j$ form, we can transform $G'$ from $B_i$ form to $B_j$ form using column-restricted equivalence operations by Lemma \ref{lem:liu}. So, removing the $m^{\text{th}}$ column and $2m^{\text{th}}$ column from $G$ is the same as removing these two columns from the transformed BCOD. By the same argument as above, we can prove that $M_j'$ is skew-symmetric.
\end{proof}

Now we are ready to prove Theorem \ref{thm:bcod_order}, which says that all indecomposable BCODs are of the parameter $[2^m, 2m, 2^{m-1}]$.

\begin{proof} (of Theorem \ref{thm:bcod_order}) We use induction on $m$. When $m = 1$, any indecomposable BCOD with 2 columns is equivalent to
$$
\begin{pmatrix}
	z_1 && 0 \\
	0 && z_1^* \\
\end{pmatrix}.
$$

Assume the conclusion is true for $m-1$. Let us prove it for $n = 2m$. Let $G'$ be the COD by removing the $m^\text{th}$ and $2m^\text{th}$ column from $G$. By Lemma \ref{lem:removal}, we know $G'$ is a BCOD, not necessarily indecomposable.

We will prove that $G'$ is the direct sum of exactly two (indecomposable) BCODs, which would complete our proof, since all indecomposable BCODs in $2(m-1)$ columns have parameters $[2^{m-1}, 2(m-1), 2^{m-2}]$ by induction hypothesis.

Without loss of generality, we assume $G$ is in $B_1$ form. Since $G'$ is obtained by deleting the $m^\text{th}$ and $2m^{\text{th}}$ column from $G$, we can write
\begin{equation}
G'=
\begin{pmatrix}
z_1I_{m-1} & M'_1 \\
-{M'}_1^H & z^*_1 I_{m-1} \\
\ldots & \ldots \\
\end{pmatrix}
\end{equation}
and
\begin{equation}
\label{equ:G_B1form}
G=
\begin{pmatrix}
z_1I_{m-1} & 0 & M'_1 & \alpha \\
0 & z_1 & -\alpha^T & 0 \\
-{M'}_1^H & \alpha^* & z^*_1 I_{m-1} & 0 \\
-\alpha^H & 0 & 0 & z_1^*\\
\ldots & \ldots & \ldots & \ldots \\
\end{pmatrix},
\end{equation}
where $\alpha$ is a vector with $m-1$ nonzero entries. By Lemma \ref{lem:all_diff},  all the entries in $\alpha$ are distinct and different from those in $M_1'$. By Lemma \ref{lem:removal}, BCOD $G'$ is a direct sum of indecomposable BCODs, denoted by $H_1, H_2, \ldots, H_{\ell}$, where $\ell \ge 1$.

Without loss of generality, assume that $H_1$ contains 
$
\begin{pmatrix}
z_1I_{m-1} & M'_1 \\
-{M'}_1^H & z^*_1 I_{m-1} \\
\end{pmatrix},
$
and $H_1$ or $H_2$ contains row $(\underbrace{0, \ldots, 0}_{m \text{ times}}, \alpha^T)$, whose existence can be seen from the second line in the matrix in \eqref{equ:G_B1form}. Putting them together, we claim the first part of $G$ must be of the following form:
$$
 \begin{pmatrix}
z_1I_{m-1} & 0 & M'_1 & \alpha \\
0 & z_1 & -\alpha^T & 0 \\
-{M'}_1^H & \alpha^* & z^*_1 I_{m-1} & 0 \\
-\alpha^H & 0 & 0 & z_1^*\\
\end{pmatrix}.
$$
Append the rest of rows in $H_1$ (as well as $H_2$, if row $(\underbrace{0, \ldots, 0}_{m \text{ times}}, \alpha^T)$ belongs to $H_2$) to the above matrix, we have
\begin{equation}
\label{equ:BCOD_completion}
 \begin{pmatrix}
z_1I_{m-1} & 0 & M'_1 & \alpha \\
0 & z_1 & -\alpha^T & 0 \\
-{M'}_1^H & \alpha^* & z^*_1 I_{m-1} & 0 \\
-\alpha^H & 0 & 0 & z_1^*\\
L & ? & R & ?\\
\end{pmatrix},
\end{equation}
where $(L, R)$ denotes the rest of rows in $H_1$ (and $H_2$), and $?$ denotes undetermined entries.

Our goal is to prove
\begin{itemize}
	\item  all the undetermined entries (marked by ?) can be fully determined by $L, R$ and the $B_1$ form submatrix,
	\item  there are no new nonzero entries except those in $L$ and $R$, and
	\item  column $m$ and column $2m$ \emph{exhaust} all indeterminates, including $[z_1]$, and all indeterminates in $\alpha$, $L$ and $R$. 
\end{itemize}
Note that the above matrix is a submatrix of a BCOD $G$. Together the above implies that \eqref{equ:BCOD_completion} is indeed a BCOD, which implies $\ell = 1$ or $2$. By Theorem \ref{thm:delay}, $\ell = 1$ is impossible, otherwise there exists a $[2^{m-1}, 2m, 2^{m-2}]$ BCOD. So $\ell = 2$, which will complete our induction.

\vspace{0.2cm}
Now let us see how to determine the entries in column $m$ and $2m$ (marked by $?$) step by step. If a row contains undetermined entries (marked as $?$), say the row is \emph{undetermined}; otherwise, say the row is \emph{determined}. During the process, we always determine the rows in \emph{pairs}. As such, column $m$ and column $2m$ always contain the same set of indeterminates.

If column $m$ and column $2m$ do not exhaust all indeterminates (in $L$ and $R$), there must exist a $[z_r]$ such that $[z_r]$ has not appeared in column $m$ and $2m$, and $[z_r]$ has appeared in a pair of determined rows. (Otherwise, it would contradict the indecomposability of $G$.) Furthermore, the determined rows, say row $i_1$ and $i_2$, that contain $[z_r]$ is of the form
\begin{eqnarray*}
\text{row $i_1$: } & (\underbrace{\ldots, [z_r], \ldots, [z_s]}_{m\text{ entries}}, \underbrace{\ldots, 0}_{m\text{ entries}}), \\
\text{row $i_2$: } & (\underbrace{\ldots, 0}_{m\text{ entries}}, \underbrace{\ldots, [z_r], \ldots, [z_s]}_{m\text{ entries}}), \\
\end{eqnarray*}
or
\begin{eqnarray*}
\text{row $i_1$: } & (\underbrace{\ldots, [z_r], \ldots, 0}_{m\text{ entries}}, \underbrace{\ldots, [z_s]}_{m\text{ entries}}), \\
\text{row $i_2$: } & (\underbrace{\ldots, [z_s]}_{m\text{ entries}}, \underbrace{\ldots, [z_r], \ldots, 0}_{m\text{ entries}}). \\
\end{eqnarray*}
Without loss of generality, assume the former case, where $[z_r]$ is on row $i_1$ column $j$, for the other case is similar. By the definition of COD, indeterminate $[z_s]$ must appear on column $j$, say $G(i_1', j) = [z_s]$. Consider the $2 \times 2$ submatrix $G(i_1, i_1'; j, m)$, which is
\[
\begin{pmatrix}
[z_r] & [z_s] \\
[z_s] & ? \\
\end{pmatrix},
\]
then $?$ must be $[z_r]$, where the sign and conjugation can be determined due to orthogonality. The other $?$ on column $2m$ must be zero by the definition of BCOD (i.e., Definition \ref{def:bcod}). Similarly, indeterminate $[z_s]$ must appear on column $m+j$, say $G(i_2', m+j) = [z_s]$. Consider the $2 \times 2$ submatrix $G(i_2, i_2'; j+m, 2m)$, which is
\[
\begin{pmatrix}
[z_r] & [z_s] \\
[z_s] & ? \\
\end{pmatrix}.
\]
By orthogonality, $?$ must be $[z_r]$, whose sign and conjugation can be uniquely determined. The other $?$ on column $m$ must be zero by the definition of BCOD. In this way, we have determined a pair of rows that contains $[z_r]$ (on column $m$ and column $2m$).

Repeat the above process until column $m$ and column $2m$ exhaust all indeterminates. We claim all the rows are determined. Otherwise, if there exists an undetermined row, say
\[
\text{row $i_1$: } (\underbrace{[z_r], \ldots, ?}_{m\text{ entries}}, \underbrace{0, \ldots, ?}_{m\text{ entries}})
\]
or 
\[
\text{row $i_1$: } (\underbrace{0, \ldots, ?}_{m\text{ entries}}, \underbrace{[z_r], \ldots, ?}_{m\text{ entries}}).
\]
Assume the former case without loss of generality. By our assumption that column $m$ and column $2m$ have exhausted all indeterminates, $[z_r]$ must have appeared on column $m$ and column $2m$, say $G(i_1', m) = [z_r]$. Consider the $2 \times 2$ submatrix $G(i_1, i_1'; 1, m)$, which is
\begin{equation}
\label{equ:bcod_order_last1}
\begin{pmatrix}
[z_r] & ? \\
0 & [z_r] \\
\end{pmatrix}
\end{equation}
or 
\begin{equation}
\label{equ:bcod_order_last2}
\begin{pmatrix}
[z_r] & ? \\
[z_s] & [z_r] \\
\end{pmatrix}
\end{equation}
If \eqref{equ:bcod_order_last2} happens, then $G(i_1, m)$ must be $[z_s]$ due to orthogonality, which contradicts the assumption that column $m$ and column $2m$ exhaust all indeterminates. So we must have \eqref{equ:bcod_order_last1}, which implies that $G(i_1, m) = 0$ by orthogonality. In this way, we can prove both entries $G(i_1, m)$ and $G(i_1, 2m)$ (marked by ?) must be 0, which contradicts to the definition of BCOD.
\end{proof}

From the proof the above theorem, we have

\begin{corollary}
\label{cor:columnm2m_unique}
Let $G$ be a $[2^m, 2m, 2^{m-1}]$ BCOD. If the first $2m$ rows of $G$ are the $B_i$ form submatrix, and columns $1, 2, \ldots, m-1$ and $m+1, m+2, \ldots, 2m$ are known, then columns $m$ and column $2m$ can be uniquely determined.
\end{corollary}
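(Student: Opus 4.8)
The plan is to extract the statement directly from the reconstruction already performed inside the proof of Theorem~\ref{thm:bcod_order}, observing that that reconstruction never reads columns $m$ and $2m$ as input and that every entry it produces is forced. First I would reduce to $i = 1$: the hypothesis only fixes that the first $2m$ rows form a $B_i$-form submatrix, so after renaming the variables I may take it to be the $B_1$-form submatrix $\left(\begin{smallmatrix} z_1 I_m & M_1 \\ -M_1^H & z_1^* I_m \end{smallmatrix}\right)$. Within these first $2m$ rows, columns $m$ and $2m$ are already part of this submatrix and hence known; the only entries left to pin down are those of columns $m$ and $2m$ in rows $2m+1, \ldots, 2^m$. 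I would then note that the supplied data — all columns other than $m$ and $2m$, restricted to these lower rows — is precisely the pair of blocks $(L, R)$ appearing in \eqref{equ:BCOD_completion}, so the hypotheses of the corollary match exactly the situation reached midway through the proof of Theorem~\ref{thm:bcod_order}.

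Next I would rerun the locality argument of that proof, now read as a determination rather than as an existence proof. Processing rows in pairs (recall every row of a BCOD is paired) and using that columns $m$ and $2m$ carry the same multiset of indeterminates at every stage, I take any still-undetermined pair and locate an indeterminate $[z_r]$ occurring in a known column of a determined row together with its column-mate $[z_s]$. The associated $2 \times 2$ submatrix $\left(\begin{smallmatrix} [z_r] & [z_s] \\ [z_s] & ? \end{smallmatrix}\right)$ then forces the entry of column $m$ (respectively $2m$) to be $[z_r]$, with its sign and conjugation pinned down by orthogonality of the two columns, while the companion entry in the other of columns $m, 2m$ is forced to $0$ by the zero-count condition of Definition~\ref{def:bcod}. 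Since each step admits exactly one value, any two $[2^m, 2m, 2^{m-1}]$ BCODs that agree on the supplied data must agree on columns $m$ and $2m$, which is the asserted uniqueness.

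The main obstacle is confirming that orthogonality pins down not merely the variable index but also the sign and conjugation of each reconstructed entry, leaving no residual freedom; this is where Lemma~\ref{lem:all_diff} enters, since the anchor variable occurs only once in the relevant triangular region and so there are no competing cancellations, while the pairing structure guarantees the two columns overlap only in the controlled way. A secondary point is termination: one must check that the anchor-finding step never stalls before all of rows $2m+1, \ldots, 2^m$ are filled. This uses that a $[2^m, 2m, 2^{m-1}]$ BCOD is necessarily indecomposable — any proper direct summand would be a BCOD with at least $2^m$ rows by Theorem~\ref{thm:delay}, forcing at least $2^{m+1}$ rows in total, a contradiction — exactly as indecomposability was invoked in the theorem to guarantee an available anchor.
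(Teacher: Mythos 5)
Your proposal is correct and takes essentially the same route as the paper, whose own justification of Corollary~\ref{cor:columnm2m_unique} is precisely to reread the forced, step-by-step reconstruction of columns $m$ and $2m$ in the proof of Theorem~\ref{thm:bcod_order} as a uniqueness argument. Your extra observation that a $[2^m, 2m, 2^{m-1}]$ BCOD is automatically indecomposable (via Theorem~\ref{thm:delay}), which licenses the anchor-finding step that the theorem's proof obtained from its indecomposability hypothesis, is a detail the paper leaves implicit and is a worthwhile addition.
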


We have shown that when the number of columns is fixed, all indecomposable BCODs have the same parameter. In fact, they are all equivalent (under equivalence operations). We will prove this result using induction on a stronger statement.

\begin{theorem}
	Let $G_1, G_2$ be two $[2^m, 2m, 2^{m-1}]$ BCODs. If $G_1$ and $G_2$ have the same pair of rows, then $G_1$ can be transformed to $G_2$ by equivalence operations without any column operations (including column permutations and column negations).
\end{theorem}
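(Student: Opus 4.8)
The plan is to induct on $m$, using the shared pair as the seed that lets the induction descend to the two smaller designs obtained by deleting two columns. For the base case $m=1$, the only $[2,2,1]$ BCOD is $\bigl(\begin{smallmatrix} z_1 & 0 \\ 0 & z_1^* \end{smallmatrix}\bigr)$ up to instance renaming, so a common pair forces $G_1$ and $G_2$ to coincide after a single instance operation, with no column operation. For the inductive step I would first use row operations and instance renaming (never column operations) to move the common pair to fixed rows among the first $2m$ and to designate its distinguished variable as $z_1$, displaying both $G_1$ and $G_2$ in $B_1$ form on a common column layout. Because the two designs genuinely share this pair, their columns are already coordinated, so no column permutation or negation is needed to align them; keeping this alignment honest is a point that will require care, since in general passing to another $B_j$ form via Lemma~\ref{lem:liu} costs column permutations.

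Next I would delete columns $m$ and $2m$. By Lemma~\ref{lem:removal} the results $G_1'$ and $G_2'$ are BCODs, and by the decomposition established inside the proof of Theorem~\ref{thm:bcod_order} each splits as a direct sum $H_1 \oplus H_2$ of two indecomposable $[2^{m-1}, 2(m-1), 2^{m-2}]$ BCODs, where the first summand carries the reduced $B_1$-submatrix. Write $H_1^{(t)}, H_2^{(t)}$ for the summands coming from $G_t$ ($t=1,2$). The shared pair descends, after discarding its two deleted entries, to a common pair of the first halves $H_1^{(1)}$ and $H_1^{(2)}$, so the induction hypothesis transforms one first half into the other using row and instance operations only.

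The heart of the argument, and the step I expect to be the main obstacle, is to run the same induction on the second halves $H_2^{(1)}$ and $H_2^{(2)}$. A single shared pair of the ambient designs only obviously descends to the half that contains it, so I must manufacture a common pair for the second halves, and do so without any column operation (a column permutation would be the naive way to align two pairs with different zero patterns, and that is exactly what is forbidden). Here I would invoke the locality/orthogonality analysis behind Corollary~\ref{cor:columnm2m_unique}: the rows linking $H_2$ to the deleted columns $m$ and $2m$ are forced by orthogonality once the first half and the $B_1$-submatrix are pinned down, and I would use this to exhibit the required common pair. This is precisely where the strengthening to \emph{no column operations} is indispensable, because it guarantees that the transformation of the second halves acts on rows and on variables disjoint from those of the first halves, and hence can be applied simultaneously with the first-half transformation without the two interfering.

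Finally, once both half-transformations have been applied, $G_1'$ and $G_2'$ coincide, which means $G_1$ and $G_2$ now agree on every column except $m$ and $2m$. By Corollary~\ref{cor:columnm2m_unique} the two remaining columns are uniquely determined by the others together with the $B_1$-submatrix, so they must agree as well, giving $G_1 = G_2$. Since the entire transformation is a concatenation of row operations and instance operations only, this establishes the statement with no column operations used, as required.
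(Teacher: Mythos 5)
Your proposal follows the paper's proof skeleton exactly: induct on $m$, place both designs in $B_1$ form with the shared pair first, delete columns $m$ and $2m$, apply the induction hypothesis to the first halves, manufacture a common pair for the second halves, apply the hypothesis again, and finish with Corollary~\ref{cor:columnm2m_unique}; your non-interference remark (disjoint rows and variables) is also the paper's stated reason for the ``no column operations'' strengthening. The gap is in the step you yourself flag as the heart of the argument. You propose to obtain the common pair of the second halves by claiming the rows linking $H_2$ to the deleted columns are ``forced by orthogonality once the first half and the $B_1$-submatrix are pinned down.'' Orthogonality cannot do this. The candidate common pair consists of rows $m$ and $2m$ of $G$, namely $(0,\ldots,0,z_1,-\alpha^T,0)$ and $(-\alpha^H,0,\ldots,0,z_1^*)$, which after deletion become the pair $\left(\begin{smallmatrix} 0 & -\alpha^T \\ -\alpha^H & 0 \end{smallmatrix}\right)$ of the second half. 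But the entries of $\alpha$ in $G_2$ are variables belonging to the second half of $G_2$, disjoint from everything in the already-aligned first halves; no orthogonality relation constrains their names, signs, or conjugations relative to $G_1$'s $\alpha$, so nothing is ``forced.'' Moreover, Corollary~\ref{cor:columnm2m_unique} runs in the opposite direction from how you invoke it: it determines the two \emph{deleted columns} from the rest of the matrix, not rows of the second half from the first half.

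What closes the gap --- and is what the paper does --- is an instance-operation argument, not an orthogonality argument. By Lemma~\ref{lem:all_diff}, the entries of $\alpha$ are pairwise distinct variables, distinct from those in $M$, and they occur in the $B_1$ submatrix for the first time. Hence one can apply instance renaming, instance conjugation, and instance negation to $G_2$ (plus, if needed, a row negation to fix the sign of $z_1$ in row $m$) to make its $\alpha$ literally identical to $G_1$'s $\alpha$. These are equivalence operations but not column operations, and since the variables involved are fresh, they disturb none of the already-aligned entries. After this alignment the pair above is literally shared by $N_1$ and $N_2$, your induction hypothesis applies to the second halves, and the remainder of your argument (Corollary~\ref{cor:columnm2m_unique} recovering columns $m$ and $2m$, including signs and conjugations) is correct and identical to the paper's.
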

\begin{proof} We use induction on $m$. When $m = 1$, the conclusion is trivial.

Suppose the conclusion is true for $m-1$. Without loss of generality, assume the first two rows of $G_1$ and $G_2$ is
$$
\begin{pmatrix}
z_1 & 0 & \cdots & 0 & 0 & 0 & z_2 & \cdots & z_{k-1} & z_k\\
0 & z_2^* & \cdots & z_{k-1}^* & z_k^* & z_1^* & 0 & \cdots & 0 & 0 
\end{pmatrix},
$$
and thus both $G_1$ and $G_2$ are in $B_1$ form. Why? Because indeterminate $\pm z_1$ must appear on column $1, 2, \ldots, m$; indeterminate $\pm z^*_1$ must appear on column $m, m+1, \ldots, 2m$.

Our goal is to prove $G_1$ and $G_2$ can be transformed to the same COD without column operations, which implies that $G_1$ can be transformed to $G_2$ by equivalent operations without column operations.

Note that after removing column $m$ and $2m$, $G_1$ is the direct sum of two $[2^{m-1}, 2(m-1), 2^{m-2}]$ BCODs, denoted by $M_1$ and $N_1$; $G_2$ is the direct sum of $M_2$ and $N_2$. By induction hypothesis, $M_1$ can be transformed to $M_2$ without using any column operations, since they share the same pair of rows
$$
\begin{pmatrix}
z_1 & 0 & \cdots & 0 & 0 & z_2 & \cdots & z_{k-1}\\
0 & z_2^* & \cdots & z_{k-1}^* & z_1^* & 0 & \cdots & 0 
\end{pmatrix}.
$$

Since $G_1$ is in $B_1$ form, using equivalence operations without column operations, $G_1$ is equivalent to
$$
\begin{pmatrix}
z_1I_{m-1} & 0 & M & \alpha \\
0 & z_1 & -\alpha^T & 0 \\
-M^H & \alpha^* & z_1^*I_{m-1} & 0\\
-\alpha^H & 0 & 0 & z_1^*\\
M'_{1L} & * & M'_{1R} & *\\
N'_{1L} & * & N'_{1R} & *\\
\end{pmatrix},
$$
where $(M'_{1L}, M'_{1R})$ denotes the rest of rows in $M_1$ excluding 
$
\begin{pmatrix}
z_1I_{m-1} & M \\
-M^H & z_1^*I_{m-1}\\
\end{pmatrix},
$
and $(N'_{1L}, N'_{1R})$ denotes the rest of rows in $N_1$ excluding
$
\begin{pmatrix}
0 & -\alpha^T \\
-\alpha^H & 0\\
\end{pmatrix}.
$
Similarly, using equivalence operations without column operations, we can transform $G_2$ into the form
$$
\begin{pmatrix}
z_1I_{m-1} & 0 & M & \alpha \\
0 & z_1 & -\alpha^T & 0 \\
-M^H & \alpha^* & z_1^*I_{m-1} & 0\\
-\alpha^H & 0 & 0 & z_1^*\\
M'_{2L} & * & M'_{2R} & *\\
N'_{2L} & * & N'_{2R} & *\\
\end{pmatrix},
$$
where the first $2m$ rows are exactly the same as $G_1$. For example, we use instance renaming to make the indeterminates in $\alpha$ the same. If the conjugation or negation does not match, we can conjugate or negate the indeterminate, since it appears for the first time. If $z_1$ on the second row is negated, we can negate the entire row. In this way, we have a pair of rows
$$
\begin{pmatrix}
0 & -\alpha^T \\
-\alpha^H  & 0 \\
\end{pmatrix}
$$
that has also appeared in $N_1$. By the induction hypothesis, $N_2$ can be transformed to $N_1$ without using column operations.

By Corollary \ref{cor:columnm2m_unique}, the remaining entries on column $m$ and $2m$ are uniquely determined, including signs and conjugations. 
\end{proof}

\section{Conclusion}

We have proved that the parameters of an indecomposable BCOD are uniquely determined by the number of columns. More precisely, for any indecomposable BCOD with $2m$ columns, the parameters must be $[2^m, 2m, 2^{m-1}]$. Furthermore, we have proved that any two indecomposable BCODs of the same parameters are equivalent (as of Definition~\ref{def:equiv}). In other words, there is only one indecomposable BCOD with $2m$ columns up to equivalence. 

So far, we have fully understood the structure of BCODs; as shown by Adams \etal \cite{Adams2011}, for any $m$, one can generate a $[2^m,2m,2^{m-1}]$ BCOD by using the modified-Liang algorithm. For general COD classes, there is still a lot of work to do. The main problem, that is, deciding all possible parameters of CODs, is wide open. We hope our methods will shed light on more general cases.


\printbibliography

\end{document}